\newcommand{\ignore}[1]{}
\newcommand{\dom}{\textit{Dom}} %slice number
\newcommand\inter[1]{\llbracket #1 \rrbracket}
\newcommand{\sst}{\textnormal{SST}} 
\newcommand{\seq}[1]{\langle #1 \rangle}
\newcommand{\Vars}{\mathcal{X}}
\tikzstyle{background}=[rectangle,fill=gray!10, inner sep=0.1cm, rounded corners=0mm]
\begin{document}
\title{On Streaming String Transducers and HDT0L Systems}
\author{Emmanuel Filiot~\inst{1} \and Pierre-Alain Reynier~\inst{2}}
\institute{Computer Science department, Universit\'e Libre de Bruxelles, Belgium \and Aix-Marseille Universit\'e, LIF, France}

\maketitle

\begin{abstract}
    Copyless streaming string transducers (copyless SST) have been
    introduced by R.~Alur and
    P.~Cerny in 2010 as a one-way deterministic automata model to define
    transformations of finite strings. Copyless SST extend deterministic finite state automata
    with a set of registers in which to store intermediate output
    strings, and those registers can be combined and updated all along
    the run, in a linear manner, i.e., no register content can be
    copied on transitions. It is known that copyless SST capture exactly the class of
    MSO-definable string-to-string transformations, as defined by B.~Courcelle, and are
    equi-expressive to deterministic two-way transducers. They enjoy
    good algorithmic properties. Most notably, they have decidable
    equivalence problem (in \textsf{PSpace}). In this paper, we show
    that they still have decidable equivalence problem even without
    the copyless restriction. The proof reduces to the HDT0L sequence
    equivalence problem, which is known to be decidable. We also show
    that this latter problem is as difficult as the SST equivalence
    problem, modulo linear time reduction. 
\end{abstract}

\section{Introduction}

The theory of languages is extremely rich and important automata-logic
correspondances have been shown for various classes of logics, automata,
and structures. There are less known automata-logic connections in the
theory of transformations. Nevertheless, important results have been
obtained for the class of MSO-definable transformations, as defined by
Courcelle. Most notably, it has been shown by J.~Engelfriet and
H.J.~Hoogeboom that MSO-definable (finite) string to string
transformations are exactly those transformations defined by 
deterministic two-way transducers~\cite{EH01}. This result has then been extended to 
ordered ranked trees by J.~Engelfriet and S.~Maneth, for the class of
linear-size increase macro tree transducers~\cite{MTT}. More recently,
MSO-definable transformations of finite strings have also been
characterized by a new automata model, that of (copyless) streaming
string transducers, by R.~Alur and P.~Cerny~\cite{AC10}.

Copyless streaming string transducers (SST) extend deterministic finite state
automata with a finite set of string variables $X,Y,\dots$. Each variable stores an
intermediate string output and can be combined and updated with other 
variables. Along the transitions, a finite string can be appended or
prepended to a variable, and variables can be reset or
concatenated. The variable updates along the transitions are formally
defined by variable substitutions and the copyless restriction
is defined by considering only linear substitutions. Therefore,
variable update such as $X := XX$ are forbidden by the copyless
restriction. The SST model has then been extended to other structures
such as infinite strings~\cite{DBLP:conf/lics/AlurFT12}, 
trees~\cite{DBLP:conf/icalp/AlurD12}, and quantitative
languages~\cite{DBLP:conf/lics/AlurDDRY13}.

Copyless SST enjoy many good algorithmic properties such as decidable
equivalence problem (given two copyless SST, do they define the same 
transformation?) or decidable typechecking (given a copyless SST $T$
and two finite automata $A,B$, does $T(L(A))\subseteq L(B)$
hold?). While the copyless SST equivalence problem is known to be
decidable in \textsf{PSpace}, it is unknown whether decidability still
holds without the copyless restriction. In this paper, we close this
question and show that (copyfull) SST have decidable equivalence
problem. We reduce this problem to the HDT0L sequence equivalence
problem, which is decidable~\cite{DBLP:journals/tcs/CulikK86}, but its complexity is
unknown. We actually prove a slightly more general result on the class
of non-deterministic SST: we show that checking whether a
given non-deterministic SST defines a function is a decidable problem, again by
reduction to the HDT0L equivalence problem. Conversely, we show that the SST equivalence problem is as
hard as the HDT0L sequence equivalence problem, modulo linear time
reduction.

\section{Streaming String Transducers}

For all finite alphabets $\Sigma$, we denote by $\Sigma^*$ the set of
finite words over $\Sigma$, and by $\epsilon$ the empty word. 
Let $\Vars$ be a finite set of variables denoted by $X,Y,\dots$ and $\Gamma$ be a finite alphabet. 
A substitution $\sigma$ is defined as a mapping ${\sigma : \Vars \to (\Gamma \cup
  \Vars)^*}$. Let $\mathcal{S}_{\Vars, \Gamma}$ be the set of all substitutions.
Any substitution $\sigma$ can be extended to $\hat{\sigma}: (\Gamma \cup \Vars)^*
\to (\Gamma \cup \Vars)^*$ in a straightforward manner.
The composition $\sigma_1 \sigma_2$  of two substitutions $\sigma_1$ and $\sigma_2$
is defined as the standard function composition  $\hat{\sigma_1} \sigma_2$,
i.e. $\hat{\sigma_1}\sigma_2(X) = \hat{\sigma_1}(\sigma_2(X))$ for all $X \in \Vars$. 
% We say that a string $u \in (\Gamma \cup \Vars)^*$ is \emph{linear} if each 
% $x \in \Vars$ occurs at most once in $u$. 
% A substitution $\sigma$ is linear if $\hat{\sigma}(u)$
% is linear, for all $u \in (\Gamma \cup \Vars)^*$ .
\begin{definition}
  \label{def:sst}
  A streaming string transducer (\sst for short) is a tuple 
  $T = (\Sigma, \Gamma, Q, q_0, Q_f, \delta, \Vars, \rho, F)$ where:
  \begin{itemize}
  \item
    $\Sigma$ and $\Gamma$ are finite sets of input and output alphabets;
  \item 
    $Q$ is a finite set of states with initial state $q_0$;
  \item 
    $\delta : Q \times \Sigma \to Q$ is a transition function;
  \item 
    $\Vars$ is a finite set of variables;
  \item 
    $\rho : \delta \to \mathcal{S}_{\Vars, \Gamma}$ is a variable update
    function ;  
  \item $Q_f$ is a subset of final states;
  \item 
    $F: Q_f \rightharpoonup \Vars^*$ is an output  function.
  \end{itemize}
\end{definition}

The concept of a run of an \sst{} is defined in an analogous manner to that of
a finite state automaton.
The sequence $\seq{\sigma_{r, i}}_{0 \leq i \leq |r|}$ of substitutions induced
by a run $r = q_0 \xrightarrow{a_1} q_1 \xrightarrow{a_2} q_2 \ldots q_{n-1}\xrightarrow{a_n} q_n$ is defined
inductively as the following: $\sigma_{r, i} {=} \sigma_{r, i{-}1} \rho(q_{i-1}, a_{i})$
for $1 < i \leq |r|$ and $\sigma_{r,1} = \rho(q_0,a_1)$. We denote $\sigma_{r,|r|}$ by $\sigma_r$.

If $r$ is accepting, i.e. $q_n\in Q_f$, we can extend the output function $F$ to $r$ by 
$F(r) = \sigma_\epsilon\sigma_{r}F(q_n)$, where $\sigma_\epsilon$
substitute all variables by their initial value $\epsilon$. For all
words $w\in\Sigma^*$, the output of $w$ by $T$ is defined only if there exists an accepting run $r$ of $T$ on $w$, and in that case the output
is denoted by $T(w) = F(r)$. The \emph{domain} of $T$, denoted by
$\dom(T)$, is defined as the set of words $w$ on which there exists
an accepting run of $T$. The transformation $\inter{T}$ defined by
$T$ is the function which maps any word $w\in \dom(T)$ to its output
$T(w)$.

In~\cite{AC10}, the variable update are required to be
\emph{copyless}, i.e. no variable can occur more than once in the rhs of the
substitutions $\rho(t)$, for all $t\in \delta$. One of the main result
of~\cite{AC10} is to show that this restriction allows one to capture
exactly the class of MSO-definable transformations. 

\paragraph{BiSST} A \emph{biSST} $T = (\Sigma, \Gamma, Q, q_0, Q_f,
\delta, \Vars, \rho, F)$ is defined as an SST, except for the output
function $F$, which is a function from $Q_f$ to $\Vars^*\times
\Vars^*$. The semantics of SST can be naturally extended to biSST such
that $\inter{T}$ is a function from $\Sigma^*$ to $\Gamma^*\times
\Gamma^*$. A biSST $T$ is \emph{diagonal} if for all $w\in\Sigma^*$,
$T(w) = (s,s)$ for some $s\in\Gamma^*$.

\paragraph{Non-deterministic SST and biSST} SST and biSST can be
naturally extended with non-determinism. In the non-deterministic
setting, transitions are defined by a relation $\delta\subseteq
Q\times \Sigma \times Q$. Since there might be several accepting runs
associated with an input word $w$, a non-deterministic SST
(resp. biSST) $T$ defines a relation from $\Sigma^*$ to $\Gamma^*$,
(resp. from $\Sigma^*$ to $\Gamma^*{\times}\Gamma^*$), defined as the 
set of pairs $(w,F(r))$ such that $r$ is an accepting run of $T$ on
$w$. A non-deterministic SST $T$ is \emph{functional} if it defines a
function, i.e. for all words $w\in\Sigma^*$, $|\{ w'\ |\
(w,w')\in\inter{T}\}|\leq 1$. A non-deterministic biSST $T$ is
\emph{diagonal} if for all words $w,w_1,w_2\in \Sigma^*$, if 
$(w,(w_1,w_2))\in\inter{T}$, then $w_1=w_2$.

\paragraph{Synchronised Product of SST} Let $T_1$ and $T_2$  be two non-deterministic
SST: $T_1 {=} (\Sigma, \Gamma, Q^{(1)},  q_0^{(1)}, \allowbreak Q_f^{(1)},
\delta^{(1)}, \Vars^{(1)}, \rho^{(1)}, F^{(1)})$ and $T_2 = (\Sigma, \Gamma, Q^{(2)}, q_0^{(2)}, Q_f^{(2)},
\delta^{(2)},\allowbreak \Vars^{(2)}, \rho^{(2)}, F^{(2)})$. For all
transitions $t_1 = (q_1,a_1,p_1)\in \delta^{(1)}$ and $t_2 =
(q_2,a_2,p_2)\in \delta^{(2)}$, the product transition $t_1\otimes
t_2$ is defined only if $a_1=a_2$, by $((q_1,q_2),a_1,(p_1,p_2))$. 
We define the product $T_1\otimes T_2$ of $T_1$
and $T_2$ as a biSST that simulates both SST $T_1$ and $T_2$ in
parallel on the same inputs, and produces the pair of respective outputs
of these SSTs. Formally, $T_1\otimes T_2$ is the biSST
$T = (\Sigma, \Gamma, Q^{(1)}\times Q^{(2)}, (q_0^{(1)},q_0^{(2)}),
Q_f^{(1)}\times Q_f^{(2)}, \delta^{(1)}\otimes \delta^{(2)}, \Vars^{(1)}\uplus \Vars^{(2)},
\rho, F)$ where:
\begin{itemize}
    \item for all $t_1\otimes t_2\in\delta^{(1)}\otimes \delta^{(2)}$, all $X\in \Vars^{(1)}\uplus
      \Vars^{(2)}$, $\rho(t)(X) = \rho^{(i)}(t_i)(X)$ if $X\in
      \Vars^{(i)}$
    \item $F$ is defined only on $Q_f^{(1)}\times Q_f^{(2)}$, by
      $F(q_1,q_2) = (F^{(1)}(q_1),F^{(2)}(q_2))$ for all $(q_1,q_2)\in
      Q_f^{(1)}\times Q_f^{(2)}$.
\end{itemize}

The following proposition is immediate by construction of $T_1\otimes T_2$:
\begin{proposition}
Given two non-deterministic SST $T_1$ and $T_2$, the following holds:
\begin{enumerate}
\item $\dom(T_1\otimes T_2) = \dom(T_1)\cap \dom(T_2)$, 
\item $\inter{T_1\otimes T_2} = \{ (w,(w_1,w_2))\ |\ (w,w_1)\in
\inter{T_1}\wedge (w,w_2)\in \inter{T_2}\}$, 
\item if both $T_1$ and $T_2$ are deterministic, then so is
  $T_1\otimes T_2$. 
\end{enumerate}
\end{proposition}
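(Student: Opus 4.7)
The plan is to establish a tight correspondence between runs of $T_1 \otimes T_2$ and pairs of runs of $T_1$ and $T_2$ on the same input, then read all three items off that correspondence. The proof is essentially bookkeeping, since the construction is designed precisely so that the two components evolve independently.

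First I would show, by a straightforward induction on $|w|$, that a sequence $r = (q_0^{(1)},q_0^{(2)}) \xrightarrow{a_1} (p_1^{(1)},p_1^{(2)}) \cdots \xrightarrow{a_n} (p_n^{(1)},p_n^{(2)})$ is a run of $T_1 \otimes T_2$ on $w = a_1 \cdots a_n$ if and only if, for $i=1,2$, the projection $r_i = q_0^{(i)} \xrightarrow{a_1} p_1^{(i)} \cdots \xrightarrow{a_n} p_n^{(i)}$ is a run of $T_i$ on $w$. This is immediate from the definition of $\delta^{(1)} \otimes \delta^{(2)}$, which only fires product transitions when both components read the same letter. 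Moreover, $r$ is accepting (i.e. $(p_n^{(1)}, p_n^{(2)}) \in Q_f^{(1)} \times Q_f^{(2)}$) exactly when both $r_1$ and $r_2$ are accepting.

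Second, I would track the substitutions. Because the variable set is the disjoint union $\Vars^{(1)} \uplus \Vars^{(2)}$, and because $\rho(t_1 \otimes t_2)$ is defined to act as $\rho^{(i)}(t_i)$ on $\Vars^{(i)}$, an easy induction gives that the restriction of $\sigma_r$ to $\Vars^{(i)}$ equals $\sigma_{r_i}$, and that the image of $\sigma_{r_i}$ lies in $(\Gamma \cup \Vars^{(i)})^*$ (so no cross-contamination occurs under composition $\hat{\sigma_1}\sigma_2$). Combining this with the definition $F(q_1,q_2) = (F^{(1)}(q_1), F^{(2)}(q_2))$ and the fact that $F^{(i)}(q_i)\in(\Vars^{(i)})^*$, one obtains $F(r) = (F^{(1)}(r_1), F^{(2)}(r_2))$ after substituting $\sigma_\epsilon$ for remaining variables.

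Items (1) and (2) are then immediate: $w \in \dom(T_1 \otimes T_2)$ iff there is some accepting $r$, iff both projections $r_1, r_2$ are accepting, iff $w \in \dom(T_1) \cap \dom(T_2)$; and the pair of outputs produced by $r$ is precisely $(T_1(w_1\text{-output of } r_1), T_2(w_2\text{-output of } r_2))$, which yields the characterisation of $\inter{T_1 \otimes T_2}$. Item (3) is the easiest: if $\delta^{(1)}$ and $\delta^{(2)}$ are partial functions, then for each $((q_1,q_2), a)$ there is at most one successor $(p_1, p_2)$ with both $(q_i, a, p_i) \in \delta^{(i)}$, so $\delta^{(1)} \otimes \delta^{(2)}$ is also a partial function. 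The only mild obstacle is the notational discipline needed to verify that the disjoint-variable invariant is preserved under the composition of substitutions used to define $\sigma_{r,i}$, but this is routine rather than conceptual.
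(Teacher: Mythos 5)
Your proof is correct, and it simply spells out the routine run-projection and substitution-tracking argument that the paper itself declares ``immediate by construction'' and does not write down. There is no divergence in approach; you have just filled in the bookkeeping the authors omit.
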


\section{SST Functionality and Equivalence Problems}
\label{sec:prelims}

The functionality problem for non-deterministic copyless SSTs, as well as
the equivalence problem of (deterministic) copyless SSTs, are both known to be
decidable in \textsf{PSpace}~\cite{conf/icalp/AlurD11}. We show that both problems are
still decidable even if one drops the copyless assumption.

More precisely, the \emph{functionality} problem asks, given a non-deterministic SST $T$, whether it is
functional, i.e., whether $\inter{T}$ is a function. The
\emph{equivalence} problem asks, given two (deterministic) SST $T_1$
and $T_2$, whether they define the same function, i.e. $\inter{T_1} =
\inter{T_2}$. We show that both problems reduce to the diagonal
problem of biSST: given a biSST $T$, it amounts to decide whether $T$
is diagonal. The following result is indeed trivial:

\begin{proposition}
    Let $T$ be a non-deterministic SST. Then $T$ is functional iff 
    $T\otimes T$ is diagonal. 
\end{proposition}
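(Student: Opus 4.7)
The plan is to unfold both definitions and apply the characterisation of $\inter{T_1 \otimes T_2}$ given by item~2 of the previous proposition, specialised to $T_1 = T_2 = T$. That characterisation tells us
\[
\inter{T\otimes T} = \{\,(w,(w_1,w_2)) \mid (w,w_1)\in \inter{T} \wedge (w,w_2)\in \inter{T}\,\},
\]
so the statement essentially becomes a tautology once one rewrites the two properties side by side.

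For the forward direction, I would assume $T$ is functional and pick an arbitrary $(w,(w_1,w_2)) \in \inter{T\otimes T}$. By the identity above, both $(w,w_1)$ and $(w,w_2)$ lie in $\inter{T}$, so functionality (i.e.\ $|\{w' \mid (w,w')\in \inter{T}\}| \le 1$) forces $w_1 = w_2$, which is exactly the diagonal condition.

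For the converse, I would assume $T\otimes T$ is diagonal and fix $w\in\Sigma^*$ with two outputs $w_1,w_2$ such that $(w,w_1),(w,w_2)\in\inter{T}$. The characterisation of the product then places $(w,(w_1,w_2))$ in $\inter{T\otimes T}$, and the diagonal assumption yields $w_1 = w_2$; hence $|\{w' \mid (w,w')\in\inter{T}\}|\le 1$ and $T$ is functional.

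There is no genuine obstacle here: the only content is the semantic equality for the synchronised product, which has already been established. The whole argument is a two-line bi-implication and can be written out as such in the paper.
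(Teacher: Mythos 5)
Your argument is correct and is exactly the intended one: the paper labels this proposition ``trivial'' precisely because it follows immediately from item~2 of the preceding proposition on the synchronised product, specialised to $T_1=T_2=T$, together with the definitions of functionality and of a diagonal biSST. Nothing further is needed.
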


The equivalence problem of (deterministic) SST can be reduced to the
functionality problem of non-deterministic SST:
 
\begin{proposition}
    Let $T_1,T_2$ be two deterministic SST. Then $T_1$ and $T_2$
    are equivalent iff 
    \begin{enumerate}
        \item $\dom(T_1) = \dom(T_2)$, and
        \item $T_1\otimes T_2$ is diagonal.
    \end{enumerate}
\end{proposition}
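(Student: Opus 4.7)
The plan is to unfold both directions directly from the definitions, leaning on the semantic characterisation of $T_1\otimes T_2$ given in the preceding proposition. There is no real technical difficulty here; the statement is essentially a restatement of equality of functions in terms of coincident domains and a diagonal relation.

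For the forward direction, I would assume $\inter{T_1}=\inter{T_2}$. Then (1) is immediate since equal (partial) functions have equal domains. For (2), take any $w\in\dom(T_1\otimes T_2)$; by point~1 of the synchronised product proposition, $\dom(T_1\otimes T_2)=\dom(T_1)\cap\dom(T_2)$, and both $T_1$ and $T_2$ are deterministic, so $T_1(w)$ and $T_2(w)$ are each uniquely defined. By point~2 of that same proposition, the unique pair output on $w$ by $T_1\otimes T_2$ is $(T_1(w),T_2(w))$, which equals $(T_1(w),T_1(w))$ by equivalence. Hence $T_1\otimes T_2$ is diagonal.

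For the backward direction, assume (1) and (2). Fix any $w\in\Sigma^*$. If $w\notin\dom(T_1)$, then by (1) also $w\notin\dom(T_2)$, so neither transducer produces output on $w$. Otherwise $w\in\dom(T_1)=\dom(T_2)=\dom(T_1\otimes T_2)$, and by point~2 of the product proposition the pair $(T_1(w),T_2(w))$ belongs to $\inter{T_1\otimes T_2}$. By assumption~(2) this pair has the form $(s,s)$, forcing $T_1(w)=T_2(w)$. Thus $\inter{T_1}=\inter{T_2}$.

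No step is really an obstacle; the only thing to be careful about is treating the case $w\notin\dom(T_i)$ explicitly, since otherwise one might mistakenly conclude equivalence from diagonality of $T_1\otimes T_2$ alone. This is precisely why condition~(1) cannot be dropped: $T_1\otimes T_2$ could vacuously be diagonal on a strict subset of $\dom(T_1)\cup\dom(T_2)$ while $T_1$ and $T_2$ disagree on the symmetric difference of their domains.
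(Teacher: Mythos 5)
Your proof is correct and is exactly the argument the paper has in mind; the paper states this proposition without proof (treating it as immediate from the synchronised-product proposition), and your unfolding of both directions, including the explicit treatment of words outside the common domain, fills in precisely the intended details.
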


In the rest of this section, we therefore focus on the diagonal
problem for non-deterministic biSST, and show its decidability by reduction to the HDT0L
equivalence problem.

\paragraph{HDT0L Sequence Equivalence Problem} 
An \emph{HDT0L instance} $\mathcal{I}$ is given two finite alphabet $A,B$, $2n+2$ morphisms $h_1,\dots,h_n,h,g_1,\dots,g_n,g$ such that
$h_i,g_i:A^*\rightarrow A^*$ and $h,g:A^*\rightarrow B^*$, and two words $v,w\in A^*$. 
The \emph{HDT0L sequence equivalence problem} asks
to decide, given such an instance $\mathcal{I}$, whether for all  sequences $i_1,\dots,i_k$ such that $i_j\in
\{1,\dots,n\}$, the following holds
$$
h(h_{i_1}(\dots h_{i_k}(v)\dots)) = 
g(g_{i_1}(\dots g_{i_k}(w)\dots))
$$
In this case, we say that $\mathcal{I}$ is valid. 
This problem is known to be decidable~\cite{DBLP:journals/tcs/CulikK86} but its precise
complexity unknown.
% (the known upper bound is
%non-elementary\marginpar{to be checked}). 

\begin{lemma}
    Given a biSST $T$, one can construct in polynomial time an
    HDT0L instance $\mathcal{I}_T$ such that $T$ is diagonal iff 
    $\mathcal{I}_T$ is valid. 
\end{lemma}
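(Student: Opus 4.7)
The plan is to mimic the backward expansion of a biSST run's substitution, $\hat\sigma_r(F_i(q_n)) = \hat\tau_1(\cdots \hat\tau_n(F_i(q_n)) \cdots)$ with $\tau_j = \rho(q_{j-1}, a_j)$, by the nested structure of HDT0L morphisms. I take the alphabet $A = \{(s, q) \mid s \in \Vars \cup \Gamma,\ q \in Q \cup \{\bot\}\} \cup \{\star\}$, initial words $v = w = \star$, and index set $\delta \cup \{\mathrm{init}_q \mid q \in Q_f\}$. For $t = (p, a, q) \in \delta$, the morphism $h_t = g_t$ acts by $(X, q) \mapsto \iota_p(\rho(t)(X))$, $(c, q) \mapsto (c, p)$ for $c \in \Gamma$, $(s, q') \mapsto (s, \bot)$ for $q' \in Q \setminus \{q\}$, $(s, \bot) \mapsto (s, \bot)$, and $\star \mapsto \star$, where $\iota_p$ denotes the letter-wise tagging of a word in $(\Vars \cup \Gamma)^*$ with state $p$. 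For $q \in Q_f$, the init morphism sends $\star$ to $\iota_q(F_1(q))$ on side $1$ and $\iota_q(F_2(q))$ on side $2$ (writing $F(q) = (F_1(q), F_2(q))$), and is identity elsewhere. The final morphisms $h = g$ emit $(c, q_0) \mapsto c$, $(X, q_0) \mapsto \epsilon$, and send every other symbol (including $\star$ and all non-$q_0$ tags) to $\epsilon$.

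The key invariant to establish by induction on the sequence length is \emph{uniform tagging}: applying any HDT0L morphism sequence to $\star$ yields either $\star$ (if no init has fired yet) or $\iota_{q^*}(u)$ for some $u \in (\Vars \cup \Gamma)^*$ and a single $q^* \in Q \cup \{\bot\}$ that depends only on the sequence of indices and not on whether we are on side $1$ or side $2$. Indeed, the init uniformly tags all letters with its $q$; a matching transition uniformly advances tags from $q$ to $p$; a mismatched transition uniformly collapses them to $\bot$; and further inits act as identity on already-tagged words. Hence both sides always share the same final tag $q^*$, differ only in the content $u$, and the final morphism emits the $\Gamma$-letters of $u$ precisely when $q^* = q_0$.

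It follows that for any accepting biSST run $r \colon q_0 \xrightarrow{a_1} \cdots \xrightarrow{a_n} q_n$ with $q_n \in Q_f$, the HDT0L sequence $(t_1, \dots, t_n, \mathrm{init}_{q_n})$ (with init innermost) simulates $r$ exactly, producing on side $1$ (resp. side $2$) the first (resp. second) component of $F(r)$; whereas any other HDT0L sequence ends with $q^* \neq q_0$ and outputs $\epsilon$ on both sides. Therefore $\mathcal{I}_T$ is valid iff every accepting run of $T$ has matching outputs, i.e., iff $T$ is diagonal, and the construction is evidently polynomial in $|T|$. The main technical obstacle is preventing partial valid sub-computations from leaving genuine $\Gamma$-letters that would differ between the two sides in an otherwise invalid sequence; this is resolved by tagging \emph{every} letter (also $\Gamma$-letters) with the current state and by propagating $\bot$ through \emph{all} symbols upon any mismatch, so that invalid branches uniformly collapse to $\epsilon$ on both sides.
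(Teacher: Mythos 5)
Your construction is essentially the paper's: both encode the outside-in expansion $\hat\sigma_{t_1}(\cdots\hat\sigma_{t_n}(F_i(q_n))\cdots)$ by tagging each variable and output letter with a state, letting the morphism of a transition $(p,a,q)$ rewrite only letters tagged $q$ into letters tagged $p$, bootstrapping with a special symbol ($\star$, resp.\ $\#$) that only the final-state morphisms expand, and extracting at the end only the $\Gamma$-letters tagged $q_0$. The one inaccuracy is your claim that every non-canonical sequence collapses to $\epsilon$ on both sides: since your transition morphisms fix $\star$ and your extra inits act as the identity on tagged words, a sequence such as $(t_1,\dots,t_n,\mathrm{init}_{q_n},t')$ still outputs $\pi_i(T(u))$ rather than $\epsilon$. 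This is harmless --- such sequences still emit the two projections of a single run's output, so validity of $\mathcal{I}_T$ remains equivalent to diagonality --- but the case analysis should be restated as ``every sequence yields either $(\epsilon,\epsilon)$ or $(\pi_1(T(u)),\pi_2(T(u)))$ for some $u\in\dom(T)$''; the paper avoids the issue by having transition morphisms erase $\#$ and final-state morphisms erase everything but $\#$, which enforces the strict dichotomy you assert.
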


\begin{proof}
Let $T = (\Sigma, \Gamma, Q, q_0, Q_f, \delta, \Vars, \rho, F)$. 
We define the two alphabets of $\mathcal{I}_T$ as follows:
\begin{itemize}
\item $A= \{ \alpha_q \mid \alpha\in\mathcal{X}\cup \Gamma \cup \{\$,\#\} , q \in Q\} \cup \{ \#\}$
\item $B=\Gamma$
\end{itemize}

We first introduce the morphism 
$\text{subscript}_q: (\mathcal{X}\cup \Gamma \cup \{\$,\#\})^* \rightarrow A^*$, for every state $q\in Q$,
and $\alpha\in \mathcal{X}\cup \Gamma \cup \{\$,\#\}$, by
$\text{subscript}_q(\alpha)=\alpha_q$.
%as follows:
%$$
%\text{subscript}_q
%: \left \{
%    \begin{array}{llll}
%     X  & \rightarrow  & X_q  & \text{ if }X\in \mathcal{X} \\
%     a  & \rightarrow& a_q  &  \text{ if }a\in \Gamma \\
%     \$ & \rightarrow& \$_q  & \\
%     \# & \rightarrow& \#_q & \\        
%    \end{array}
%    \right .  
%$$

We then define the following morphisms:
\begin{itemize}
    \item Morphism $f_q^i:A^* \rightarrow A^*$, where $q\in Q_f$ and $i=1,2$:
    $$f_q^i (\alpha) = \left \{
    \begin{array}{ll}
     \text{subscript}_q(\$\pi_i(F(q))) & \text{ if } \alpha= \# \\
     \epsilon                                     & \text{ otherwise.}        
    \end{array}
    \right .        
        $$
        % and all other symboles are mapped to $\epsilon$,
        %identity on other symbols,

    \item Morphism $f_t:A^* \rightarrow A^*$, where $t = (q,a,q')\in\delta$:
    $$f_t(\alpha) = \left \{
    \begin{array}{ll}
     \text{subscript}_q(\rho(t)(X))        & \text{ if }\alpha = X_{q'} \text{ with }X\in \mathcal{X} \\
     \text{subscript}_q(a)                   & \text{ if }\alpha = a_{q'} \text{ with }a\in \Gamma \cup \{\$\} \\
     \epsilon                                        & \text{ otherwise }       
    \end{array}
    \right .        
        $$
%    for each variable $X\in \mathcal{X}$, 
%    $f_t$ maps variable $X_{q'}$ to $[\rho(t)(X)]_{[Z\mapsto Z_q,\forall
%        Z\in\mathcal{X}], [a\mapsto a_q,\forall q\in Q]}$, it
%      maps $\$_q$ to $\$_{q'}$, all $a_{q'}$ to $a_{q}$ for $a\in \Gamma$,  and all
%      other symbols are mapped to $\epsilon$,
    \item Morphism $f_{q_0}:A^* \rightarrow B^*$:
    $$f_{q_0}(\alpha) = \left \{
    \begin{array}{ll}
     a              & \text{ if } \alpha = a_{q_0} \text{ with }a\in \Gamma   \\
     \epsilon    & \text{ otherwise}      
    \end{array}
    \right .        
        $$
%  
%    which maps any symbol $a_{q_0}$ to
%      $a$ and all other symbols to $\epsilon$. 
\end{itemize}

We define the instance $\mathcal{I}_T$ of the
HDT0L equivalence problem by:

\begin{itemize}
    \item $v=w=\#$
    \item the pairs $(h_q,g_q) = (f_q^1,f_q^2)$ for all $q\in Q_f$
    \item the pairs $(h_t,g_t) = (f_t,f_t)$ for all transitions $t\in\delta$
    \item the final pair $(h,g) = (f_{q_0}, f_{q_0})$
\end{itemize}

We claim that $T$ is diagonal iff $\mathcal{I}_T$ is valid.

Consider a sequence $i_1,\ldots,i_k$ such that $i_j\in Q_f \cup \delta$.
We say it is \emph{valid} if $i_j\in Q_f$ iff $j=k$, and the sequence of transitions $i_1,\ldots i_{k-1}$
is a sequence of consecutive transitions of $T$, and verifies that the source of $i_1$ is the initial state $q_0$ of $T$, 
and that the target state of $i_{k-1}$ is the state $i_k$.

We state the two following properties which can be proven by induction :
\begin{enumerate}[$(i)$]
\item given an accepting run $r : q_0 \xrightarrow{u} q$ of $T$, corresponding to transitions 
$t_1,\ldots,t_n$, the following equality holds for $i=1,2$:
$$f_{q_0}(f_{t_1}(\dots f_{t_n}(f_q^i(\#)\dots))) = \pi_i(T(u))$$
\item given any sequence $i_1,\ldots,i_k$ which is not valid, we have 
$$h(h_{i_1}(\dots h_{i_k}(\#)\dots)) = \epsilon =  g(g_{i_1}(\dots g_{i_k}(\#)\dots))$$
\end{enumerate}

First suppose that $\mathcal{I}_T$ is valid, we show that $T$ is diagonal. Take
$u=a_1\dots a_n\in \dom(T)$ and $t_1\dots t_n$ the sequence of
transitions of $T$ on $u$. Suppose that $q$ is the final state reached
by $t_n$. Then, we have by hypothesis 
$$
f_{q_0}(f_{t_1}(\dots f_{t_n}(f_q^1(\#)\dots))) = 
f_{q_0}(f_{t_1}(\dots f_{t_n}(f_q^2(\#)\dots)))
$$
from which we get $\pi_1(T(u)) = \pi_2(T(u))$.

Conversely, either the sequence $i_1,\ldots,i_k$ is not valid, and then equality follows from the  
property $(ii)$ stated above, or it is valid, and it then follows from property $(i)$ and from the fact that $T$
is diagonal.
\end{proof}

\section{From HDT0L sequence equivalence to SST equivalence problem}

In this section, we show that the SST equivalence problem is as
difficult as the HDT0L sequence equivalence problem.  

\begin{lemma}
    Given an HDT0L instance $\mathcal{I}$, one can construct in
    linear time two (deterministic) SST $T_1$ and $T_2$ such that $T_1$ and $T_2$ are
    equivalent iff $\mathcal{I}$ is valid. 
\end{lemma}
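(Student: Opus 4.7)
The plan is to encode HDT0L sequences as input words for two deterministic SSTs $T_1$ and $T_2$ over the alphabet $\Sigma = \{\$, 1, \dots, n\}$, so that on input $\$\, i_1 \cdots i_k$ the transducer $T_1$ outputs $h(h_{i_1}(\dots h_{i_k}(v)\dots))$ and $T_2$ outputs $g(g_{i_1}(\dots g_{i_k}(w)\dots))$. Since both SSTs will share the domain $\$\{1,\dots,n\}^*$, their equivalence will coincide exactly with the validity of $\mathcal{I}$.

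For $T_1$, I use one variable $Y_a$ per letter $a \in A$, maintaining the invariant that after reading $\$\, i_1 \cdots i_j$ the value of $Y_a$ is $h(\phi_j(a)) \in B^*$, where $\phi_j = h_{i_1} \circ \cdots \circ h_{i_j}$ (so $\phi_0$ is the identity). The SST has two states $q_0$ and $q_1$, with $q_1$ the unique final state. The transition $\delta(q_0, \$) = q_1$ carries the update $Y_a := h(a)$, a \emph{constant} string in $B^*$ which establishes the invariant at $j = 0$. For each $i \in \{1,\dots,n\}$, the transition $\delta(q_1, i) = q_1$ carries the update $Y_a := Y_{b_1} Y_{b_2} \cdots Y_{b_m}$ where $h_i(a) = b_1 b_2 \cdots b_m$. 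Unfolding the substitution-composition semantics gives, for the new value of $Y_a$, the string $h(\phi_j(b_1)) \cdots h(\phi_j(b_m)) = h(\phi_j(h_i(a))) = h(\phi_{j+1}(a))$, so the invariant propagates. Finally, writing $v = v_1 \cdots v_\ell$, I set $F(q_1) = Y_{v_1} Y_{v_2} \cdots Y_{v_\ell}$, which evaluates to $h(\phi_k(v))$. The SST $T_2$ is built symmetrically from the $g_i$'s, $g$, and $w$, using fresh variables $Z_a$ for $a \in A$.

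The only subtle point is linearity of the construction. The update on $\$$ writes the constants $h(a)$ of total size $\sum_{a \in A} |h(a)| = |h|$; each regular update on $i$ writes $\sum_{a \in A} |h_i(a)| = |h_i|$ variable occurrences; and the output function contributes $|v|$ occurrences. Summing, the size of $T_1$ is $O(|h| + \sum_i |h_i| + |v|)$, which is linear in $|\mathcal{I}|$. The crucial trick is the start marker $\$$ isolating the initialization: were one to fold $h$ into each first transition instead, the constants $h(h_i(a))$ would arise, giving a potentially quadratic blow-up $\sum_a |h(h_i(a))|$ per $i$. Because $T_1$ and $T_2$ share the domain $\$\{1,\dots,n\}^*$, are deterministic by construction, and their outputs match the HDT0L images (including the boundary case $k=0$, where the output is simply $h(v)$ resp.~$g(w)$), we conclude $T_1 \equiv T_2$ iff $\mathcal{I}$ is valid.
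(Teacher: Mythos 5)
Your construction is essentially identical to the paper's: two states with a start marker (the paper uses $0$ where you use $\$$) whose transition loads the constants $h(a)$ into the registers, loop transitions on $i$ applying the renamed morphism $h_i$, and output function given by the renamed axiom $v$ (symmetrically for $g$, $g_i$, $w$). Your forward invariant on register contents is just a more explicit version of the paper's one-line correctness remark, and your observation about why the start marker keeps the construction linear is a nice bonus, so the proposal is correct and follows the same route.
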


\begin{proof}
Let $\mathcal{I} = (A,B, g,h,g_1,h_1,\dots,g_n,h_n,v,w)$ be an HDT0L instance. 
We construct two SSTs $T_1$ and $T_2$ such that $\mathcal{I}$ is valid
iff $T_1$ and $T_2$ are equivalent. The two SST both have two states $q_0,q_1$ and
$p_0,p_1$ respectively, where $q_0,p_0$ are initial, and $p_1,q_1$ are
final. The input alphabet of both $T_i$ is $\mathbb{N}_n = \{0,1,\dots,n\}$ and the
output alphabet is $B$. They both have the same set of variables
$\Vars = \{ X_a\ |\ a\in A\}$. Their transitions are defined by:

\begin{itemize}

  \item for $T_1$: $q_0\xrightarrow{0} q_1$, and for all
    $i\in\mathbb{N}_n\setminus \{0\}$, $q_1\xrightarrow{i} q_1$,
  \item for $T_2$: $p_0\xrightarrow{0} p_1$, and for all $i\in\mathbb{N}_n\setminus \{0\}$, $p_1\xrightarrow{i} p_1$.
\end{itemize}

To define the update functions, we first introduce the morphism
$\text{rename}_X : A^*\rightarrow \Vars^*$ defined for all $a\in A$ by
$\text{rename}_X(a) = X_a$. Then, the update functions $\rho_1$ and $\rho_2$ of $T_1$ and $T_2$
respectively are defined, for all $a\in A$, by:

\begin{itemize}
  \item $\rho_1(q_0,0,q_1)(X_a) = h(a)$,
  \item $\rho_1(q_1,i,q_1)(X_a) = \text{rename}_X(h_i(a))$ for all
    $i\in \mathbb{N}_n\setminus \{0\}$,
  \item $\rho_2(p_0,0,p_1)(X_a) = g(a)$,
  \item $\rho_2(p_1,i,p_1)(X_a) = \text{rename}_X(g_i(a))$ for all
    $i\in \mathbb{N}_n\setminus \{0\}$.
\end{itemize}

Finally, the respective output functions of $T_1$ and $T_2$ are 
defined by $F_1(q_1) = \text{rename}_X(v)$ and $F_2(p_1) = \text{rename}_X(w)$.

In order to show the correctness of the construction, it suffices to
remark that given a run $r_1 = q_0\xrightarrow{0}
q_1\xrightarrow{i_1}q_1\dots q_1\xrightarrow{i_k} q_1$ in $T_1$, the
output $F_1(r_1)$ of $r_1$ is exactly $h(h_{i_1}(\dots
h_{i_k}(v)\dots))$, i.e. $T_1(0i_1\dots i_k) =  h(h_{i_1}(\dots
h_{i_k}(v)\dots))$. Similarly, $T_2(0i_1\dots i_k) = g(g_{i_1}(\dots
g_{i_k}(w)\dots))$.

\begin{example} Let us illustrate this construction on some example. Consider the
    following HDT0L instance $(A,B,h,g,h_1,g_1,v,w)$ with $A = 
    \{a,b,c,d\}$, $B = \{ e,f\}$, $v = c$ and $w = cd$, and the
    morphisms are defined as follows:
    $$
    \begin{array}{cccccccccccccccccccccccccc}

        h = g & : & a & \rightarrow & e &\qquad h_1 & : & a & \rightarrow & a &
        \qquad g_1 & : &  a & \rightarrow a \\
           &   &  b & \rightarrow & f & & & b & \rightarrow & b & & &
           b & \rightarrow b \\
            &   &  c & \rightarrow & \epsilon & & & c & \rightarrow &
            acb & & & c & \rightarrow & ca\\
            &   & d & \rightarrow & \epsilon & & &  d & \rightarrow &
            \epsilon & & & d & \rightarrow & db 
    \end{array}
    $$

    For instance, we can obtain the following deriviations:
    $$
    \begin{array}{ccccccccccccccccccccccc}
    v = c & \xrightarrow{h_1} & acb
    & \xrightarrow{h_1} & a^2cb^2 & \xrightarrow{h_1} & a^3cb^3
    & \xrightarrow{h} & e^3f^3 \\

    w = cd & \xrightarrow{g_1} & cadb
    & \xrightarrow{g_1} & ca^2db^2 & \xrightarrow{g_1} & ca^3db^3
    & \xrightarrow{g} & e^3f^3 \\
    \end{array}
    $$
The two constructed SSTs are then the following:

\begin{center}
\begin{tikzpicture}[->,>=stealth',shorten >=1pt,auto,node distance=1cm,
 semithick,scale=0.9]
  \tikzstyle{every state}=[fill=blue!20!white,minimum size=2em]
\tikzstyle{accepting-output}=[accepting by arrow, accepting
  where=below, accepting text=$X_c$]
 \node[initial, initial text={},state,fill=blue!20!white] at (1,0) (A) {$q_0$} ;
 \node[state,accepting-output, accepting, fill=blue!20!white] at (5,0) (B) {$q_1$} ;
 \path(A) edge  node[anchor=south,below]{$0 \left| \begin{array}{lll}
           X_a & := & 
       e \\ X_b & :=&  f \\ X_c& :=& \epsilon \\ X_d& :=& \epsilon\end{array}\right. $} (B);
  \path(B) edge [loop above] node[anchor=north,above]  {$1
    \left| \begin{array}{lll} X_a & := &
       X_a \\ X_b & :=&  X_b \\ X_c  & :=& X_aX_cX_b \\ X_d& :=& \epsilon\end{array}\right. $} (B);
 \node[draw=none] at (-0.5,0) {$T_1:$};

  \tikzstyle{every state}=[fill=blue!20!white,minimum size=2em]
\tikzstyle{accepting-output}=[accepting by arrow, accepting
  where=below, accepting text=$X_cX_d$]
 \node[initial, initial text={},state,fill=blue!20!white] at (8,0) (A) {$p_0$} ;
 \node[state,accepting-output, accepting, fill=blue!20!white] at (12,0) (B) {$p_1$} ;
 \path(A) edge  node[anchor=south,below]{$0 \left| \begin{array}{lll}
           X_a & := &
       e \\ X_b & :=&  f \\ X_c& :=& \epsilon \\ X_d& :=& \epsilon\end{array}\right. $} (B);
  \path(B) edge [loop above] node[anchor=north,above]  {$1
    \left| \begin{array}{lll} X_a & := & 
       X_a \\ X_b & := & X_b \\ X_c& :=& X_cX_a \\ X_d& :=& X_dX_b\end{array}\right. $} (B);
 \node[draw=none] at (6.5,0) {$T_2:$};
   \end{tikzpicture}
\end{center}
\end{example}

% Finally, let us briefly show that the construction is correct. Suppose
% that $T_1$ and $T_2$ are not equivalent. Since they have the same
% domain, there exists necessarily a sequence $i_1,\dots,i_k$ such that 
% $1\leq i_j\leq n$ for all $j\in\{1,\dots,k\}$ and $T_1(0i_1\dots
% i_k)\neq T_2(0i_1\dots i_k)$. By definition of $T_1$ and $T_2$, it
% means that $
\end{proof}

%\section{Conclusion} 

\bibliographystyle{abbrv}
%\bibliography{biblio}

\begin{thebibliography}{1}

\bibitem{AC10}
R.~Alur and P.~{\v C}ern\'y.
\newblock {Expressiveness of streaming string transducers}.
\newblock In {\em FSTTCS}, volume~8, pages 1--12, 2010.

\bibitem{DBLP:conf/icalp/AlurD12}
R.~Alur and L.~D'Antoni.
\newblock Streaming tree transducers.
\newblock In {\em ICALP (2)}, volume 7392 of {\em LNCS}, pages 42--53.
  Springer, 2012.

\bibitem{DBLP:conf/lics/AlurDDRY13}
R.~Alur, L.~D'Antoni, J.~V. Deshmukh, M.~Raghothaman, and Y.~Yuan.
\newblock Regular functions and cost register automata.
\newblock In {\em 28th Annual {ACM/IEEE} Symposium on Logic in Computer
  Science, {LICS} 2013, New Orleans, LA, USA, June 25-28, 2013}, pages 13--22.
  {IEEE} Computer Society, 2013.

\bibitem{conf/icalp/AlurD11}
R.~Alur and J.~V. Deshmukh.
\newblock Nondeterministic streaming string transducers.
\newblock In {\em ICALP}, volume 6756 of {\em LNCS}, pages 1--20. Springer,
  2011.

\bibitem{DBLP:conf/lics/AlurFT12}
R.~Alur, E.~Filiot, and A.~Trivedi.
\newblock Regular transformations of infinite strings.
\newblock In {\em LICS}, pages 65--74. IEEE, 2012.

\bibitem{EH01}
J.~Engelfriet and H.~J. Hoogeboom.
\newblock {MSO} definable string transductions and two-way finite-state
  transducers.
\newblock {\em ACM Trans. Comput. Logic}, 2:216--254, 2001.

\bibitem{MTT}
J.~Engelfriet and S.~Maneth.
\newblock Macro tree transducers, attribute grammars, and mso definable tree
  translations.
\newblock {\em Information and Computation}, 154(1):34--91, 1999.

\bibitem{DBLP:journals/tcs/CulikK86}
K.~C. II and J.~Karhum{\"{a}}ki.
\newblock The equivalence of finite valued transducers (on {HDT0L} languages)
  is decidable.
\newblock {\em Theor. Comput. Sci.}, 47(3):71--84, 1986.

\end{thebibliography}

\end{document}